\documentclass[]{llncs}
\usepackage{csquotes}
\usepackage{amsmath}
\usepackage{centernot}
\usepackage{ltl}
\usepackage{tikz}
\usepackage{pgfplots}
\usepackage{xspace}
\usepackage{mathtools}
\usepackage{arydshln}
\usepackage{multirow}
\usepackage{subfig}

\definecolor{red}{RGB}{102,0,0}
\definecolor{green}{RGB}{102,204,0}
\definecolor{yellow}{RGB}{0,0,0}

\setlength\dashlinedash{1pt}
\setlength\dashlinegap{1pt}
\setlength\arrayrulewidth{0.5pt}

\newcommand{\U}{\mbox{$\, {\sf U}\,$}}
\newcommand{\R}{\mbox{$\, {\sf R}\,$}}
\newcommand{\F}{\mbox{$\sf F$}}
\newcommand{\G}{\mbox{$\sf G$}}
\newcommand{\X}{\mbox{$\sf X$}}
\newcommand{\WU}{\mbox{$\, {\sf W}\,$}}

\newcommand{\V}{\mathcal{V}}

\newcommand{\AP}{\mathit{AP}}

\usepackage{wrapfig}

\newcommand{\tool}{\textsc{MGHyper}\xspace}

\begin{document}
	\title{\tool: Checking Satisfiability of HyperLTL Formulas Beyond the $\exists^*\forall^*$ Fragment\thanks{This work was partially supported by the German Research Foundation (DFG) in the Collaborative Research Center 1223 and by the European Research Council (ERC) Grant OSARES (No. 683300).}}
	\author{Bernd Finkbeiner \and Christopher Hahn \and Tobias Hans}
	\institute{Reactive Systems Group\\Saarland University\\\email{lastname@react.uni-saarland.de}}
	\maketitle
	%\vspace{-10ex}
	%\begin{abstract}
	%	We introduce \tool, a tool for the automatic satisfiability checking and model generation of hyperproperties beyond previous established decidability bounds. Hyperproperties generalize trace properties by relating multiple execution traces to each other. For example, \emph{observational determinism} is a hyperproperty requiring that if two system traces are equal on its observable input, the observable output of those traces should be equal as well. Such equality constraints cannot be expressed in state-of-the-art temporal logics, like linear-time temporal logic (LTL). HyperLTL generalizes linear-time temporal logic (LTL) with explicit trace quantification and is, thus, expressive enough to formalize a wide range of practical relevant hyperproperties.
	%\end{abstract}
\vspace{-5ex}	
	\begin{abstract}
          Hyperproperties are properties that refer to multiple
          computation traces. This includes many information-flow
          security policies, such as observational determinism,
          (generalized) noninterference, and noninference, and other
          system properties like symmetry or Hamming distances between
          in error-resistant codes.  We introduce \tool, a tool for
          automatic satisfiability checking and model generation for
          hyperproperties expressed in HyperLTL. Unlike previous
          satisfiability checkers, \tool is not limited to the
          decidable $\exists^*\forall^*$ fragment of HyperLTL, but
          provides a semi-decisionprocedure for the full logic.  An
          important application of \tool is to automatically check
          equivalences between different hyperproperties (and
          different formalizations of the same hyperproperty) and to
          build counterexamples that disprove a certain claimed
          implication. We describe the semi-decisionprocedure
          implemented in \tool and report on experimental results
          obtained both with typical hyperproperties from the
          literature and with randomly generated HyperLTL formulas.
%          
%We introduce \tool, a tool for the automatic satisfiability checking and model generation of hyperproperties beyond previous established decidability bounds.
%          beyond previous established decidability bounds. Hyperproperties generalize trace properties by relating multiple execution traces to each other. For example, \emph{noninference} is a variant of noninterference stating that, for all system traces, the low-observable behavior must not change when all high inputs are replaced by a dummy input. HyperLTL generalizes linear-time temporal logic (LTL) with explicit trace quantification and is, thus, expressive enough to formalize a wide range of practical relevant hyperproperties. \tool implements the first semi-decision procedure for HyperLTL formulas of the undecidable $\forall^*\exists^*$-fragment. Applications of \tool range over plain satisfiability checking to generating models that prove that a certain hyperproperty is not implied by another hyperproperty.
\vspace{-2ex}
	\end{abstract}
	
	\section{Introduction}
	\vspace{-1ex}
	\sloppy
	\label{intro}
	HyperLTL~\cite{clarkson2014temporal} extends linear-time temporal logic (LTL)~\cite{pnueli1977temporal} with explicit quantification over traces. This makes it possible to express hyperproperties~\cite{clarkson2010hyperproperties} like noninterference~\cite{DBLP:conf/sp/GoguenM82a} or symmetry~\cite{DBLP:conf/cav/FinkbeinerRS15}, which refer to multiple traces at the same time. Such properties are not expressible in LTL, or even in the branching-time temporal logics CTL~\cite{DBLP:conf/lop/ClarkeE81} and CTL$^*$~\cite{DBLP:journals/jacm/EmersonH86}.
	For example, \emph{noninference}~\cite{DBLP:journals/tse/McLean96} is a variant of noninterference stating that, for all system traces, the low-observable behavior must not change when all high inputs are replaced by a dummy input. The following HyperLTL formula expresses this policy:
	$\forall \pi.\;\exists \pi'.\;(\G \lambda_{\pi'} ) \wedge \pi =_{L, out} {\pi'}.$
	HyperLTL is supported by model checking~\cite{DBLP:conf/cav/FinkbeinerRS15} and runtime monitoring tools~\cite{DBLP:conf/rv/FinkbeinerHST17,DBLP:conf/tacas/FinkbeinerHST18}. There is also a decision procedure, EAHyper~\cite{finkbeiner2017eahyper}, which checks the satisfiability of a given formula from the $\exists^*\forall^*$ fragment of HyperLTL. EAHyper is based on a reduction from HyperLTL satisfiability to LTL satisfiability~\cite{DBLP:conf/concur/FinkbeinerH16}. A major application of EAHyper is to check equivalences between alternation-free HyperLTL formulas, i.e., formulas that either contain only universal quantifiers or only existential quantifers. Such equivalences can be expressed in the $\exists^*\forall^*$ fragment. It is impossible, however, to handle formulas that contain a $\forall \exists$ quantifier alternation, as, for example, in noninference. This is unfortunate, because such a quantifier alternation is often needed, in particular to account for nondeterminism. A popular example is \emph{generalized noninterference}~\cite{DBLP:conf/sp/McCullough88}:
$\forall\pi.\forall\pi'.\exists\pi''.~\pi\!=_{H,\mathsf{in}}\!\pi'' \wedge \pi'\!=_{L, out}\!\pi''$.
The formula expresses that for every possible high-security input (seen on some trace $\pi$) and every possible low-security observations (seen on some trace $\pi'$) there exists a nondeterministic execution $\pi''$ where the high-security input and the low-security observations happen together. Hence, the observer cannot conclude, after making the low-security observations, that any specific high-security input actually occurred. Other properties that need a $\forall \exists$ quantifier alternation include restrictiveness~\cite{DBLP:journals/tse/McCullough90}, separability~\cite{DBLP:conf/sp/McLean94}, and forward correctability~\cite{DBLP:journals/jcs/Millen95}.
For formulas outside the $\exists^*\forall^*$ fragment, it is no longer possible to reduce the HyperLTL satisfiability problem to the LTL satsifiability problem: the HyperLTL satisfiability problem is, in fact, undecidable~\cite{DBLP:conf/concur/FinkbeinerH16}. In this paper, we present the first semi-decisionprocedure for full HyperLTL. Our approach is based on a reduction to quantified boolean formulas (QBF)~\cite{DBLP:books/fm/GareyJ79} and has been implemented in the \tool tool. \tool can be used to analyze and develop hyperproperties and, especially, generate models that disprove equivalences or implications between
different hyperproperties or different formalizations of the same hyperproperty. For example, comparing noninference to generalized noninterference, \tool instantly demonstrates that the two properties are \emph{not} equivalent.
%In the remainder of this paper, we provide the necessary preliminaries on HyperLTL and QBF, describe the semi-decisionprocedure on which \tool is based, and describe the tool implementation. We report on experimental results obtained both with typical hyperproperties from the literature and with randomly generated HyperLTL formulas.
	\vspace{-2ex}
\section{A Semi-Decision Procedure for HyperLTL-SAT}
\vspace{-1ex}
%In this section, we briefly describe the theoretical foundations behind \tool: we begin by intuitively explaining the underlying semi-decision procedure, before describing it formally. We close this section with a small example.
A \emph{hyperproperty} is a \emph{set of sets of traces}.
Hyperproperties can be expressed in HyperLTL, which generalizes LTL with explicit trace quantification:
%The syntax of HyperLTL is given as follows:
\vspace{-1ex}
\begin{align*}
\psi~&::=~\exists\pi.\;\psi~~|~~\forall\pi.\;\psi~~|~~\varphi \\
\varphi~&::=~a_\pi~~|~\neg \varphi~~|~~\varphi~\vee~\varphi~~|~~\X\;\varphi~~|~~\varphi\, \U \varphi~~|~~true
\end{align*}
where $Q$ is an existential or universal quantifier, $a \in \AP$ is an atomic proposition and $\pi \in \V$ is a trace variable of an infinitely supply $\V$.
Logical connectives and the temporal operators  $\F$, $\G$, $\WU$ and $\R$ are defined as in LTL.
%\emph{HyperLTL Semantic}:
The semantics of HyperLTL is defined as follows.
\vspace{-1ex}
\begin{align*}
&\Pi \models_T~\exists\pi. \psi &&\text{iff}\hspace{5ex} \text{there exists}~t \in T~:~ \Pi[\pi \mapsto t] \models_T \psi \\
&\Pi \models_T~\forall\pi. \psi &&\text{iff}\hspace{5ex} \text{for all}~t \in T~:~ \Pi[\pi \mapsto t] \models_T \psi \\
&\Pi \models_T~a_\pi &&\text{iff}\hspace{5ex} a \in \Pi(\pi)[0] \\
&\Pi \models_T~\neg \psi &&\text{iff}\hspace{5ex} \Pi \not \models_T \psi \\
&\Pi \models_T~\psi_1 \vee \psi_2 &&\text{iff}\hspace{5ex} \Pi \models_T \psi_1~\text{or}~\Pi \models_T \psi_2 \\
&\Pi \models_T~\X\psi &&\text{iff}\hspace{5ex} \Pi[1,\infty] \models_T \psi \\
&\Pi \models_T~\psi_1 \U \psi_2 &&\text{iff}\hspace{5ex} \text{there exists}~i \geq 0 : \Pi[i,\infty] \models_T \psi_2 \\
& &&\hspace{7.2ex} \text{and for all}~0 \leq j < i~\text{we have}~\Pi[j,\infty] \models_T \psi_1
\end{align*}
,where $\Pi:\V \mapsto TR$ is the trace assignment function, which maps trace variables to traces, denoted by $\Pi[\pi\mapsto t]$. 
%$\Pi [ i, \infty ] $ denotes the trace assignment  for  the suffixes of all traces $\pi$.
The suffixes of all traces $\pi$ starting at step $i$ is denoted by $\Pi [ i, \infty ] $.
\emph{HyperLTL-SAT} is the problem to decide, if a non-empty trace set $T$ exists, such that $\{\} \models_T \psi$.

\tool takes an \emph{arbitrary} HyperLTL formula of the following form as input: $Q_0 \vec{\pi_0} \ldots Q_n \vec{\pi_n} . \varphi $, where $Q_i \in \{ \exists, \forall\}$ and $\vec{\pi_0} \dots \vec{\pi_n}$ are vectors over $\mathcal{V}$. \tool evaluates to ``sat'' if  and only if the formula is  satisfiable. Basically, \tool checks whether or not there exists a trace set of size $m$ that satisfies the HyperLTL formula under consideration. 
The procedure starts with trace sets of size $1$, and increment $m$ until a witness is found, leading to the following theorem.
%We begin with trace sets of size $1$, and increment $m$ until a witness is found. 
%For each $m$, a QBF formula is constructed, which is satisfiable if the HyperLTL formula is satisfiable.
%{\color{red}For each $m$, we test the satisfiability of incrementally larger unrolled until we found a satisfiable QBF, which implies that the HyperLTL formula is also satisfiable.}
\vspace{-0.5ex}
\begin{theorem}\label{lemma}
	HyperLTL-SAT is RE-complete.
\end{theorem}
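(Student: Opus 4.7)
The statement factors into an RE-upper-bound and an RE-hardness lower bound, which I would handle separately.

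For the upper bound, I would argue that the enumeration procedure just described is a complete semi-decision procedure. Fix a candidate size $m$ and Skolemize each trace quantifier into a choice over the finite index set $\{1,\dots,m\}$: the subformula $\forall\pi.\,\varphi$ becomes $\bigwedge_{i=1}^{m}\varphi[\pi := t_i]$ and $\exists\pi.\,\varphi$ becomes $\bigvee_{i=1}^{m}\varphi[\pi := t_i]$. The result is a finite Boolean combination of LTL formulas over the product alphabet $(2^{\AP})^m$, so the size-$m$ satisfiability question reduces to LTL satisfiability, which is decidable and is precisely what \tool encodes as QBF. Enumerating $m=1,2,3,\dots$ therefore halts whenever the input is satisfied by some finite trace set. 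The heart of the argument, and the main technical obstacle, is a finite-trace-set property: whenever $\psi$ is HyperLTL-satisfiable, some finite set of ultimately periodic traces already satisfies it. I would prove this by extracting ultimately periodic Skolem witnesses from a putative model using a Büchi-automaton pumping argument on the matrix, and then collapsing witnesses that are indistinguishable by $\psi$.

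For RE-hardness, I would reduce from the halting problem on empty input. Given a Turing machine $M$, I would construct a HyperLTL formula $\psi_M$ whose models encode halting computations of $M$: atomic propositions encode the state, head position, and tape contents, so that each trace represents a single configuration. The formula asserts that (i) the initial configuration is present in the trace set, (ii) for every non-halting configuration in the trace set, some other trace encodes its immediate successor under $M$'s transition relation---this is where the $\forall\exists$ alternation is essential---and (iii) some trace encodes a halting configuration. The LTL matrix for clause (ii) performs cell-by-cell comparison of the two configurations and enforces the local update rule while keeping unchanged tape cells invariant. Then $\psi_M$ is satisfiable iff $M$ halts on empty input, reducing the $\Sigma^0_1$-hard halting problem to HyperLTL-SAT.

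Combining both directions yields RE-completeness. The hardness construction is conceptually a standard Turing-machine-encoding exercise once $\forall\exists$-alternation is available; the delicate part is the RE-membership claim, since $\forall\exists$-alternation might a priori force unboundedly large trace sets, and a careful pumping-and-merging argument is required to justify termination of the enumeration on every satisfiable input.
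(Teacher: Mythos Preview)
Your high-level strategy for membership---enumerate a candidate trace-set size $m$ and decide, for each fixed $m$, whether some $m$-element set satisfies $\psi$---is exactly what the paper does; the paper packages the fixed-$m$ question as a QBF with a bounded-model-checking unrolling of the LTL body, whereas you expand the trace quantifiers into finite conjunctions and disjunctions over the $m$ indices and appeal to LTL satisfiability. For hardness the paper simply cites the PCP reduction of Finkbeiner and Hahn, while you sketch a direct halting-problem encoding; either route yields RE-hardness.

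The genuine gap lies precisely where you yourself flag ``the main technical obstacle'': the finite-trace-set property is \emph{false}, so the enumeration over $m$ is not complete and your B\"uchi-pumping-and-collapsing plan cannot succeed. Over $\AP=\{a\}$, consider
\[
\exists \pi_0.\,\forall \pi.\,\exists \pi'.\;\bigl(a_{\pi_0}\wedge \X\G\neg a_{\pi_0}\bigr)\;\wedge\;\neg a_{\pi'}\;\wedge\;\G\bigl(\X a_{\pi'}\leftrightarrow a_{\pi}\bigr).
\]
Writing $t_n$ for the trace with $a$ true exactly at position $n$, the set $\{t_n:n\ge 0\}$ is a model (take $\pi_0=t_0$ and, for $\pi=t_n$, take $\pi'=t_{n+1}$). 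Conversely, any model must contain $t_0$ as the only possible witness for $\pi_0$, and whenever it contains $t_n$ the constraint $\neg a_{\pi'}\wedge\G(\X a_{\pi'}\leftrightarrow a_{\pi})$ forces the witness $\pi'$ to be exactly $t_{n+1}$; hence every model is infinite. Each $t_n$ is already ultimately periodic and no two are indistinguishable by the matrix, so neither pumping nor collapsing helps. The paper does not address this point at all---it simply asserts that the QBF family suffices---and in fact the claimed RE upper bound is now known to be wrong: HyperLTL-SAT was subsequently shown to be $\Sigma^1_1$-complete, so no argument along either your lines or the paper's can be completed.
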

%\vspace{-2ex}
\begin{proof}
	We prove membership by constructing a QBF formula $\varphi_{QBF}^m$, which is satisfiable if the given HyperLTL formula $\varphi$ is satisfiable by a trace set of size $m$.
	The basic idea of the encoding of a HyperLTL formula to a QBF formula is threefold: 1) we construct a quantifier prefix that resembles the quantifier structure in the given HyperLTL formula, 2) we construct a premise that links trace variables to actual traces, and 3) we unroll the LTL suffix into a SAT-encoding. The third step follows the unrolling presented in~\cite{biere2003bounded} and will, due to space reasons, not be discussed. We refer to the maximum trace-unrolling bound as $k$ (not to confuse with $m$), which is exponential in the size of the LTL suffix.
	\paragraph{1) Prefix.} Let S be a set and $k$ a natural number, we define ${\mathit{Traces}^k_S}$ as ${\{ a_s^i \mid 0 \leq i < k ,\forall s \in S,\forall a \in \AP  \}}$, which we use as the trace representation inside the QBF encoding. Let $\varphi:=Q_0 \vec{\pi_0} \ldots Q_n \vec{\pi_n} . \psi $ be a HyperLTL formula.
	The quantifier prefix of the resulting QBF introduces existential quantifiers, representing the trace set $T$ of size $m$ (the witness for satisfiability). The trace variables are quantified according to their quantifier in the HyperLTL formula:
	\vspace{-1ex}
	\[ \mathit{Prefix}(\varphi) = \exists \mathit{Traces}^k_{T}.Q_0 \mathit{Traces}^k_{\vec{\pi_0}}.Q_1 \mathit{Traces}^k_{\vec{\pi_1}}.\ldots . Q_n \mathit{Traces}^k_{\vec{\pi_n}}. \]
	
	\vspace{-3ex}
	
	\paragraph{2) Linking.}	
	We construct a premise to link trace variables to the trace witnesses in $\mathit{Traces}^k_T$.
	For every quantifier $Q_i$ a subpremise $P_{Q_i}$ is constructed first, which represents the mapping of all trace variables in $\vec{\pi_i}$.
	%\tool uses a premise to represent the mapping of traces variables $\pi$ to traces $t_i \in T$ as in the semantic of HyperLTL.
	%For each quantifier $Q_i$ we construct a subpremise $P_{Q_i}^k$, which represent all valid mappings from $t \mapsto \pi$.
%neue erklaerung
Mapping each trace variable to traces reassembles the trace assignment function from the HyperLTL semantics and is encoded by ensuring that the boolean variables with the same atomic proposition in the same step share the same truth value.
%In the subpremisse $P_{Q_i}$, this is represented by the innermost braces. The $\land \lor$ ensures that, every trace variable is mapped to at least one trace.}
%
	\begin{equation}
	P_{Q_i} := \Bigg[  \bigwedge_{\pi \in \vec{\pi_i}} \bigvee_{t_i \in T} 
	\Bigg[  \bigwedge_{\substack{ (a^j_{t_i} , a^j_{\pi}) \in \\ \mathit{Traces}^k_{t_i} \times \mathit{Traces}^k_{\pi} } }a^j_{t_i} \leftrightarrow a^j_{\pi}      \Bigg]  \Bigg] 
	\end{equation}
	The linking mechanism is a combination of the subpremises. The boolean connective between the different supremises depends on the corresponding quantifier:
	%If the set is universally quantified, we combine the following subpremise with an implication, otherwise with a conjunction.
	%Intuition fuer die conectvies ??
	\begin{equation}
	\mathit{Linking}(\varphi) := P_{Q_0}^k \circ_{Q_0} P_{Q_1}^k \circ_{Q_1} \ldots P_{Q_{n-1}}^k \circ_{Q_{n-1}} P_{Q_n}^k \circ_{Q_{n}},
	\end{equation}
	where $ \circ_{Q_i} $ equals $\rightarrow $, if $Q_i = \forall $, and $ \circ_{Q_i}$ equals $\land$ if $Q_i = \exists$.
	Together with \emph{3)~the~unrolling} of the LTL suffix~\cite{biere2003bounded}, the  constructed $\varphi_{QBF}^m$  a QBF formula is satisfiable if the HyperLTL formula $\varphi$ is satisfiable. %\footnote{The correctness argument for this construction can be found in the Appendix.}.
	Hardness follows from a reduction from Post's Correspondence Problem~\cite{DBLP:conf/concur/FinkbeinerH16}.\qed
\end{proof}

\begin{example}%[QBFHyper Example - Satisfiable]
	Consider the HyperLTL formula $\varphi:= \forall \pi_0 \exists \pi_1 \exists \pi_2 . (a_{\pi_0} \land  ( a_{\pi_1} \rightarrow  \neg b_{\pi_1} \land  a_{\pi_2} \rightarrow   b_{\pi_2}) )$. 
	Note that, for the sake of simplicity, the example LTL formula does not contain temporal operators. %\footnote{A bigger example with temporal operators can be found in the Appendix.}.
	%Hence, we omit the unrolling of the LTL suffix and directly annotate the variables with time step $0$.
	%Note that we omit the unrollment of the LTL suffix, because our formula does not consist of temporal operators and simplicity of the example, instead we directly annotate the variables to the step $0$.
	In the first iteration, \tool tries to guess a trace set $T$ of size $1$ and will not find a satisfying assignment for the constructed QBF formula.
	In the second iteration, though, \tool constructs the following QBF formula, with $T_2= \{ \{ a^0_{t_0} , b^0_{t_0} \}, \{ a^0_{t_1} , b^0_{t_1} \} \}$.
	%This leads to a new QBF formula.
	%\tool constructs the following QBF formula:
	\begin{align*}
	&\exists \mathit{Traces}^0_{T_2} \forall \mathit{Traces}^0_{\pi_0} \exists \mathit{Traces}^0_{\pi_1} \exists \mathit{Traces}^0_{\pi_2}. 
	\left[ \lor
	\begin{tabular}{c}
	$( a_{\pi_0}^{0} \leftrightarrow  a_{t_0}^{0} \land  b_{\pi_0}^{0} \leftrightarrow  b_{t_0}^{0} ) $ \\
	$(  a_{\pi_0}^{0} \leftrightarrow  a_{t_1}^{0} \land  b_{\pi_0}^{0} \leftrightarrow  b_{t_1}^{0}) $ \\
	\end{tabular}
	\right] \\
	&\rightarrow 
	\left(  \left[ \land
	\begin{tabular} {r c r}
	$ \Big( \lor$ &
	\begin{tabular}{c}
	$[ a_{\pi_1}^{0} \leftrightarrow  a_{t_0}^{0} \land  b_{\pi_1}^{0} \leftrightarrow  b_{t_0}^{0} ] $ \\
	$[ a_{\pi_1}^{0} \leftrightarrow  a_{t_1}^{0} \land  b_{\pi_1}^{0} \leftrightarrow  b_{t_1}^{0} ] $ \\
	\end{tabular} \Big) \\
	$ \Big( \lor$ &
	\begin{tabular}{c}
	$[ a_{\pi_2}^{0} \leftrightarrow  a_{t_0}^{0} \land  b_{\pi_2}^{0} \leftrightarrow  b_{t_0}^{0} ] $ \\
	$[ a_{\pi_2}^{0} \leftrightarrow  a_{t_1}^{0} \land  b_{\pi_2}^{0} \leftrightarrow  b_{t_1}^{0} ] $ \\
	\end{tabular} \Big) \\
	\end{tabular}
	\right] 
	\wedge (a^0_{\pi_0} \land  ( a^0_{\pi_1} \rightarrow  \neg b^0_{\pi_1} \land  a^0_{\pi_2} \rightarrow   b^0_{\pi_2}) )  \right) \\
	\end{align*}
	%\vspace{-2ex}
	This QBF formula is satisfied by the assignment $ \mathcal{A} =  \{ a^0_{t_0}, b^0_{t_0}, a^0_{t_1} , \neg b^0_{t_1} \}$ for the existentially quantified variables, which represent the traces (of length one in this example).
	There are four possible assignment for the universally quantified boolean variables. 
	For  $ \{a^0_{\pi_0} , b^0_{\pi_0}\} $ or $ \{a^0_{\pi_0} ,  \neg b^0_{\pi_0} \}$ we can map the existentially quantified traces variables $\pi_1\mapsto t_1$ and $ \pi_2 \mapsto t_0$,
	which add $ \{a_{\pi_1}^{0} , \neg b_{\pi_1}^{0}, a_{\pi_2}^{0}, b_{\pi_2}^{0}\}$ to $\mathcal{A}$, such that $\mathcal{A}$ satisfies the formula.
	In the other two cases,$\{\neg a^0_{\pi_0}, \neg b^0_{\pi_0} \}$ or $\{ \neg a^0_{\pi_0} ,  b^0_{\pi_0} \}$, we cannot map to $\neg a^0_{\pi_0}$, which leads to a false evaluation of 
	$P_{Q_0}^0$ and therefore to a true evaluation of the formula. 
	From $\mathcal{A}$ we can now follow that $\{ \{a,b\}^{\omega} , \{ a\}^{\omega} \}$  is a model that satisfies $\varphi$.
\end{example}

%\vspace{-5ex}

\section{Experimental Results}
%\vspace{-2ex}
\tool is implemented in Ocaml and supports UNIX-based operation systems.
%It is possible to check satisfiability of entire files of HyperLTL formulas or check for  counterexamples of for equivalence and implication of arbitrary HyperLTL formulas. It is also possible to invoke the predecessor tool EAHyper \cite{finkbeiner2017eahyper} on HyperLTL formulas in the $\exists^*\forall^*$-fragment
%of HyperLTL.
We tested our tool against different benchmarks on a virtual machine%\footnote{\url{https://www.dropbox.com/s/1w8eosb8jzcm9om/mghyper\_ubuntu.ova?dl=0}}
running an Ubuntu (64-Bit) 14.04LTS installation on an Intel Core i7-4710MQ with 2,50GH on 4 kernels and 8GB RAM.

\vspace{-2ex}

\noindent
\paragraph{Counter Examples for Implication of Security Polices.}
A main application of \tool is to check if two arbitrary HyperLTL formulas do not imply each other: we check if the negation of the implication is satisfiable.
The first benchmark checks the implication of 
\emph{General Noninterference}~\cite{clarkson2010hyperproperties} ((GNI):$\forall \pi_1 .\forall \pi_2. \exists \pi_3 \G( I^{high}_{\pi_1} = I^{high}_{\pi_3} ) \land \G(O^{low}_{\pi_2} = O^{low}_{\pi_3} )$),
\emph{Noninerference} ((NI): 	$\forall \pi_1.\;\exists \pi_2.\;(\G \lambda_{\pi_{2}} ) \wedge \G (O_{\pi_1} = O_{\pi_2})$)
and several formalizations of \emph{Observational Determinsim}~\cite{DBLP:journals/jcs/McLean92,DBLP:conf/sp/Roscoe95,DBLP:conf/csfw/ZdancewicM03}:
(OD): $\forall \pi_1 .\forall \pi_2.( I^{low}_{\pi_1} = I^{low}_{\pi_2} ) \rightarrow \G(O^{low}_{\pi_1} = O^{low}_{\pi_2} ) $,
(GOD): $\forall \pi_1 .\forall \pi_2. \G ( I^{low}_{\pi_1} = I^{low}_{\pi_2} ) \rightarrow \G (O^{low}_{\pi_1} = O^{low}_{\pi_2} )$, and
(WOD):$ \forall \pi_1 .\forall \pi_2.( I^{low}_{\pi_1} = I^{low}_{\pi_2} ) \WU (O^{low}_{\pi_1} \neq O^{low}_{\pi_2} )$.
\tool shows that none of the formalizations of observational determinism does imply general non-interference or noninfernce.
Furthermore, it shows that generalized non-interference does not imply noninference. Every check was done in under $0.05$ seconds.
%\newpage
%The corresponding runtimes, in seconds, are shown in Fig~\ref{secPolice}.
\begin{figure}[t]
	\centering
	\scalebox{0.8}{
		\begin{tikzpicture} 
		\begin{axis}[ height=4.5cm, width=\textwidth, grid=major,legend style={at={(1,0.5)},anchor=west}]
		\addplot[red,mark=square*] coordinates {(1,0.350768)(2,1.16265)(3,0.72175)(4,1.53487)(5,0.336617)(6,0.0233755)(7,0.0754869)(8,1.08922)(9,0.140393)(10,2.25573)(11,0.145131)(12,1.76749)(13,0.111929)(14,1.42211)(15,0.217322)(16,0.0995075)(17,0.177371)(18,2.07337)(19,0.133035)(20,0.170281)(21,0.169829)(22,0.170496)(23,1.36351)(24,0.210785)(25,0.398731)(26,0.378725)(27,0.264689)(28,0.290237)(29,0.517447)(30,0.356709)(31,2.06768)(32,0.357807)(33,0.359984)(34,0.39064)(35,0.478758)(36,0.432273)(37,0.511511)(38,0.490323)(39,0.730603)(40,0.559758)(41,1.11724)(42,0.606257)(43,1.83409)(44,0.671538)(45,0.758744)(46,1.02239)(47,0.70656)(48,0.790652)(49,1.21716)};
		\addlegendentry{existential60} 
		\addplot[orange,mark=square*]  coordinates {(1,0.00596244)(2,0.00903679)(3,0.012433)(4,0.0168263)(5,0.0212998)(6,0.0246188)(7,0.0304422)(8,0.0370211)(9,0.0429337)(10,0.0523534)(11,0.0668932)(12,0.0694483)(13,0.0724185)(14,0.11926)(15,0.0924521)(16,0.140381)(17,0.115462)(18,0.119172)(19,0.142097)(20,0.21753)(21,0.170035)(22,0.275456)(23,0.189852)(24,0.228936)(25,0.241302)(26,0.266707)(27,0.257497)(28,0.290084)(29,0.303961)(30,0.452011)(31,0.348642)(32,0.508721)(33,0.386017)(34,0.385726)(35,0.427942)(36,0.491389)(37,0.49606)(38,0.496025)(39,0.522784)(40,0.539046)(41,0.571756)(42,0.664848)(43,0.749031)(44,0.687544)(45,0.676989)(46,0.683653)(47,1.161)(48,1.1118)(49,0.835144)}; 
		\addlegendentry{universal60} 
		\addplot[blue,mark=*] coordinates {(2,1.06787)(3,1.00044)(4,2.41057)(5,1.38481)(6,1.85294)(7,0.117811)(8,0.389956)(9,2.58598)(10,1.98649)(11,1.24449)(12,2.68977)(13,0.626694)(14,0.14336)(15,0.304192)(16,0.238961)(17,0.72666)(18,0.329054)(19,0.300775)(20,0.556673)(21,2.28714)(22,0.631316)(23,0.268599)(24,0.575532)(25,0.71572)(26,0.582621)(27,2.3372)(28,0.606393)(29,0.786538)(30,0.812918)(31,0.818921)(32,0.700189)(33,1.03475)(34,0.81919)(35,0.685939)(36,2.12413)(37,0.942957)(38,1.10373)(39,1.67235)(40,1.19079)(41,1.61702)(42,1.60099)(43,1.09931)(44,1.43455)(45,1.73156)(46,1.36217)(47,1.15813)(48,3.15988)(49,2.249)}; 
		\addlegendentry{existential120} 
		\addplot[green,mark=*] coordinates {(1,0.0764965)(2,0.0125866)(3,0.0580271)(4,0.653877)(5,0.055682)(6,0.0410031)(7,0.255329)(8,0.0740704)(9,0.145155)(10,0.0760372)(11,2.02249)(12,0.0948335)(13,0.132947)(14,0.122116)(15,0.303246)(16,0.201181)(17,0.432695)(18,0.177761)(19,0.474722)(20,0.278293)(21,0.540365)(22,0.434694)(23,0.747256)(24,0.342119)(25,0.746703)(26,0.374791)(27,0.519198)(28,0.431083)(29,0.830275)(30,0.722988)(31,0.748451)(32,0.574963)(33,1.27606)(34,0.73839)(35,1.06681)(36,0.875465)(37,1.92106)(38,0.799709)(39,1.17212)(40,1.32736)(41,1.11198)(42,1.3807)(43,1.32587)(44,1.21488)(45,2.57465)(46,1.10059)(47,1.43692)(48,1.25609)(49,3.05459)}; 
		\addlegendentry{universal120} 
		\end{axis} 
		\end{tikzpicture}
		%\begin{tikzpicture} 
		%\begin{axis}[ height=4.5cm, width=\textwidth, grid=major]
		%\addplot[red,mark=square*] coordinates {(1,3)(2,5)(3,3)(4,3)(5,3)(6,3)(7,3)(8,2)(9,2)(10,2)(11,2)(12,3)(13,3)(14,3)(15,3)(16,2)(17,3)(18,2)(19,2)(20,3)(21,2)(22,3)(23,3)(24,3)(25,2)(26,3)(27,2)(28,2)(29,2)(30,2)(31,2)(32,2)(33,2)(34,2)(35,2)(36,3)(37,3)(38,3)(39,2)(40,2)(41,4)(42,2)(43,3)(44,2)(45,2)(46,3)(47,2)(48,3)(49,3)};
		%\addlegendentry{existential60} 
		%\addplot[orange,mark=square*]  coordinates{(1,2)(2,2)(3,2)(4,2)(5,2)(6,2)(7,2)(8,2)(9,2)(10,2)(11,3)(12,2)(13,2)(14,2)(15,2)(16,2)(17,2)(18,2)(19,2)(20,2)(21,2)(22,2)(23,2)(24,2)(25,2)(26,2)(27,2)(28,2)(29,2)(30,2)(31,2)(32,2)(33,2)(34,2)(35,2)(36,2)(37,2)(38,2)(39,2)(40,2)(41,2)(42,2)(43,3)(44,2)(45,2)(46,2)(47,2)(48,2)(49,2)};
		%\addlegendentry{universal60} 
		%\addplot[blue,mark=*] coordinates {(1,8)(2,9)(3,5)(4,4)(5,4)(6,5)(7,3)(8,5)(9,4)(10,5)(11,4)(12,4)(13,2)(14,5)(15,2)(16,4)(17,2)(18,3)(19,2)(20,4)(21,2)(22,2)(23,4)(24,3)(25,3)(26,3)(27,4)(28,4)(29,2)(30,3)(31,2)(32,3)(33,2)(34,3)(35,5)(36,2)(37,3)(38,4)(39,4)(40,3)(41,2)(42,4)(43,3)(44,4)(45,2)(46,3)(47,3)(48,3)(49,3)}; 
		%\addlegendentry{existential120} 
		%\addplot[green,mark=*] coordinates {(1,2)(2,2)(3,2)(4,2)(5,2)(6,2)(7,2)(8,2)(9,2)(10,3)(11,2)(12,2)(13,3)(14,3)(15,2)(16,2)(17,2)(18,3)(19,3)(20,3)(21,3)(22,2)(23,3)(24,2)(25,3)(26,2)(27,3)(28,3)(29,2)(30,2)(31,2)(32,3)(33,3)(34,2)(35,2)(36,3)(37,2)(38,3)(39,3)(40,2)(41,3)(42,2)(43,3)(44,2)(45,3)(46,3)(47,3)(48,3)(49,2)};
		%\addlegendentry{universal120} 
		%\end{axis} 
		%\end{tikzpicture}
	}
	\caption{Runtime for random HyperLTL formulas of size 60 and 120. A formula consists of up to $50$ trace variables starting universally or existentially quantified.}
	\label{rand_alt_timeout120}
\end{figure}
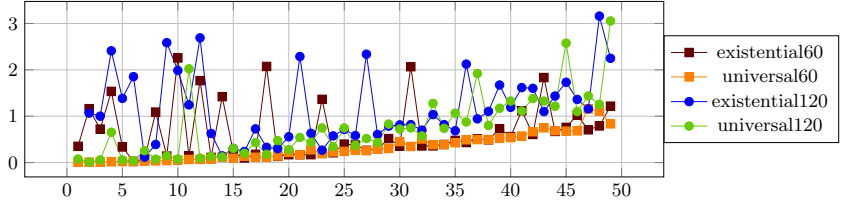
%\vspace{-6ex}

%\vspace{-5ex}

\noindent
\paragraph{Random Formulas.}
We tested \tool against different benchmarks of random formulas.
\emph{Quantfier Alternation:}
We created HyperLTL formulas with up to $49$ quantifier alternations and $15$ atomic propostion  using randltl~\cite{DBLP:conf/atva/Duret-Lutz13}.
For each number of alternations we tested $100$ formulas of size $60$ and $120$, where 50 start with $\exists$ and 50 with $\forall$. The size is the size argument provided for randltl~\cite{DBLP:conf/atva/Duret-Lutz13}. The runtimes are shown in Figure \ref{rand_alt_timeout120}.
%\item
\emph{Quantifier Ordering:}
%For a fixed set of $50$ random formulas of size $120$, we tried all possible combination of quantification of $5$ trace variables. The results are shown in Table \ref{table:5alt} in the appendix.
%\item
\emph{$\exists^{n}\forall^{m}$ Formulas:} For the sake of comparing \tool with EAHyper, we tested \tool on the largest decidable fragment of HyperLTL, which is the $\exists^*\forall^*$-fragment. We scaled in the number of existential and universal quantifiers, showing that \tool is able to solve formulas with up to $10$ existential and $10$ universal quantifier. In comparison, EAHyper, implementing the first decision procedure for HyperLTL-SAT, already runs out of memory after $5$ existential and $5$ universal quantifiers.

\begin{table}[t]
	\centering
	\caption{Random $\exists^n\forall^m$ formulas: instances solved (sol) in $120$ $s$ and average time (avgt) in $s$ for $100$ random formulas of size $60$ with $15$ atomic propositions.}
	\label{results}
        \smallskip
	\scalebox{0.8}{
	\begin{tabular}{|cc:c|c:c|c:c|c:c|c:c|c:c|c:c|c:c|c:c|c:c|}
	&avgt&sol&avgt&sol&avgt&sol&avgt&sol&avgt&sol&avgt&sol&avgt&sol&avgt&sol&avgt&sol&avgt&sol\\ 
\hline 
&\multicolumn{2}{c|}{$\exists^{1}\forall^{1}$}&\multicolumn{2}{c|}{$\exists^{1}\forall^{2}$}&\multicolumn{2}{c|}{$\exists^{1}\forall^{3}$}&\multicolumn{2}{c|}{$\exists^{1}\forall^{4}$}&\multicolumn{2}{c|}{$\exists^{1}\forall^{5}$}&\multicolumn{2}{c|}{$\exists^{1}\forall^{6}$}&\multicolumn{2}{c|}{$\exists^{1}\forall^{7}$}&\multicolumn{2}{c|}{$\exists^{1}\forall^{8}$}&\multicolumn{2}{c|}{$\exists^{1}\forall^{9}$}&\multicolumn{2}{c|}{$\exists^{1}\forall^{10}$}\\ 
& 0.472& 96& 0.599& 95& 1.371& 96& 1.537& 96& 1.322& 96& 1.587& 96& 0.529& 92& 2.823& 96& 3.166& 94& 2.303& 97\\ 
\hline 
&\multicolumn{2}{c|}{$\exists^{2}\forall^{1}$}&\multicolumn{2}{c|}{$\exists^{2}\forall^{2}$}&\multicolumn{2}{c|}{$\exists^{2}\forall^{3}$}&\multicolumn{2}{c|}{$\exists^{2}\forall^{4}$}&\multicolumn{2}{c|}{$\exists^{2}\forall^{5}$}&\multicolumn{2}{c|}{$\exists^{2}\forall^{6}$}&\multicolumn{2}{c|}{$\exists^{2}\forall^{7}$}&\multicolumn{2}{c|}{$\exists^{2}\forall^{8}$}&\multicolumn{2}{c|}{$\exists^{2}\forall^{9}$}&\multicolumn{2}{c|}{$\exists^{2}\forall^{10}$}\\ 
& 6.772& 80& 2.743& 85& 2.698& 93& 6.319& 94& 4.233& 97& 4.107& 92& 3.162& 87& 2.906& 94& 4.847& 92& 2.131& 96\\ 
\hline 
&\multicolumn{2}{c|}{$\exists^{3}\forall^{1}$}&\multicolumn{2}{c|}{$\exists^{3}\forall^{2}$}&\multicolumn{2}{c|}{$\exists^{3}\forall^{3}$}&\multicolumn{2}{c|}{$\exists^{3}\forall^{4}$}&\multicolumn{2}{c|}{$\exists^{3}\forall^{5}$}&\multicolumn{2}{c|}{$\exists^{3}\forall^{6}$}&\multicolumn{2}{c|}{$\exists^{3}\forall^{7}$}&\multicolumn{2}{c|}{$\exists^{3}\forall^{8}$}&\multicolumn{2}{c|}{$\exists^{3}\forall^{9}$}&\multicolumn{2}{c|}{$\exists^{3}\forall^{10}$}\\ 
& 5.533& 79& 9.921& 81& 5.836& 82& 4.851& 88& 7.589& 82& 3.852& 82& 9.975& 82& 5.222& 79& 6.328& 77& 5.044& 83\\ 
\hline 
&\multicolumn{2}{c|}{$\exists^{4}\forall^{1}$}&\multicolumn{2}{c|}{$\exists^{4}\forall^{2}$}&\multicolumn{2}{c|}{$\exists^{4}\forall^{3}$}&\multicolumn{2}{c|}{$\exists^{4}\forall^{4}$}&\multicolumn{2}{c|}{$\exists^{4}\forall^{5}$}&\multicolumn{2}{c|}{$\exists^{4}\forall^{6}$}&\multicolumn{2}{c|}{$\exists^{4}\forall^{7}$}&\multicolumn{2}{c|}{$\exists^{4}\forall^{8}$}&\multicolumn{2}{c|}{$\exists^{4}\forall^{9}$}&\multicolumn{2}{c|}{$\exists^{4}\forall^{10}$}\\ 
& 10.316& 75& 8.669& 80& 5.631& 83& 3.722& 83& 5.843& 73& 5.62& 81& 10.074& 79& 6.955& 76& 8.037& 85& 5.938& 79\\ 
\hline 
&\multicolumn{2}{c|}{$\exists^{5}\forall^{1}$}&\multicolumn{2}{c|}{$\exists^{5}\forall^{2}$}&\multicolumn{2}{c|}{$\exists^{5}\forall^{3}$}&\multicolumn{2}{c|}{$\exists^{5}\forall^{4}$}&\multicolumn{2}{c|}{$\exists^{5}\forall^{5}$}&\multicolumn{2}{c|}{$\exists^{5}\forall^{6}$}&\multicolumn{2}{c|}{$\exists^{5}\forall^{7}$}&\multicolumn{2}{c|}{$\exists^{5}\forall^{8}$}&\multicolumn{2}{c|}{$\exists^{5}\forall^{9}$}&\multicolumn{2}{c|}{$\exists^{5}\forall^{10}$}\\ 
& 5.431& 71& 5.009& 80& 2.812& 69& 8.514& 81& 4.501& 76& 6.255& 83& 1.574& 76& 3.616& 76& 5.85& 79& 7.486& 80\\ 
\hline 
&\multicolumn{2}{c|}{$\exists^{6}\forall^{1}$}&\multicolumn{2}{c|}{$\exists^{6}\forall^{2}$}&\multicolumn{2}{c|}{$\exists^{6}\forall^{3}$}&\multicolumn{2}{c|}{$\exists^{6}\forall^{4}$}&\multicolumn{2}{c|}{$\exists^{6}\forall^{5}$}&\multicolumn{2}{c|}{$\exists^{6}\forall^{6}$}&\multicolumn{2}{c|}{$\exists^{6}\forall^{7}$}&\multicolumn{2}{c|}{$\exists^{6}\forall^{8}$}&\multicolumn{2}{c|}{$\exists^{6}\forall^{9}$}&\multicolumn{2}{c|}{$\exists^{6}\forall^{10}$}\\ 
& 3.53& 78& 4.378& 74& 3.503& 71& 3.057& 76& 4.354& 71& 4.513& 81& 3.492& 79& 4.836& 79& 6.289& 80& 6.0& 74\\ 
\hline 
&\multicolumn{2}{c|}{$\exists^{7}\forall^{1}$}&\multicolumn{2}{c|}{$\exists^{7}\forall^{2}$}&\multicolumn{2}{c|}{$\exists^{7}\forall^{3}$}&\multicolumn{2}{c|}{$\exists^{7}\forall^{4}$}&\multicolumn{2}{c|}{$\exists^{7}\forall^{5}$}&\multicolumn{2}{c|}{$\exists^{7}\forall^{6}$}&\multicolumn{2}{c|}{$\exists^{7}\forall^{7}$}&\multicolumn{2}{c|}{$\exists^{7}\forall^{8}$}&\multicolumn{2}{c|}{$\exists^{7}\forall^{9}$}&\multicolumn{2}{c|}{$\exists^{7}\forall^{10}$}\\ 
& 4.33& 74& 3.173& 70& 1.789& 72& 7.187& 69& 4.99& 78& 5.584& 74& 4.783& 77& 7.558& 75& 7.744& 74& 6.043& 78\\ 
\hline 
&\multicolumn{2}{c|}{$\exists^{8}\forall^{1}$}&\multicolumn{2}{c|}{$\exists^{8}\forall^{2}$}&\multicolumn{2}{c|}{$\exists^{8}\forall^{3}$}&\multicolumn{2}{c|}{$\exists^{8}\forall^{4}$}&\multicolumn{2}{c|}{$\exists^{8}\forall^{5}$}&\multicolumn{2}{c|}{$\exists^{8}\forall^{6}$}&\multicolumn{2}{c|}{$\exists^{8}\forall^{7}$}&\multicolumn{2}{c|}{$\exists^{8}\forall^{8}$}&\multicolumn{2}{c|}{$\exists^{8}\forall^{9}$}&\multicolumn{2}{c|}{$\exists^{8}\forall^{10}$}\\ 
& 5.681& 81& 4.617& 79& 6.803& 74& 5.563& 75& 6.219& 80& 5.999& 74& 3.013& 73& 2.041& 72& 6.146& 75& 3.997& 75\\ 
\hline 
&\multicolumn{2}{c|}{$\exists^{9}\forall^{1}$}&\multicolumn{2}{c|}{$\exists^{9}\forall^{2}$}&\multicolumn{2}{c|}{$\exists^{9}\forall^{3}$}&\multicolumn{2}{c|}{$\exists^{9}\forall^{4}$}&\multicolumn{2}{c|}{$\exists^{9}\forall^{5}$}&\multicolumn{2}{c|}{$\exists^{9}\forall^{6}$}&\multicolumn{2}{c|}{$\exists^{9}\forall^{7}$}&\multicolumn{2}{c|}{$\exists^{9}\forall^{8}$}&\multicolumn{2}{c|}{$\exists^{9}\forall^{9}$}&\multicolumn{2}{c|}{$\exists^{9}\forall^{10}$}\\ 
& 3.509& 77& 2.514& 72& 5.659& 68& 1.345& 78& 4.379& 72& 3.914& 73& 3.422& 71& 1.784& 66& 6.903& 72& 5.142& 77\\ 
\hline 
&\multicolumn{2}{c|}{$\exists^{10}\forall^{1}$}&\multicolumn{2}{c|}{$\exists^{10}\forall^{2}$}&\multicolumn{2}{c|}{$\exists^{10}\forall^{3}$}&\multicolumn{2}{c|}{$\exists^{10}\forall^{4}$}&\multicolumn{2}{c|}{$\exists^{10}\forall^{5}$}&\multicolumn{2}{c|}{$\exists^{10}\forall^{6}$}&\multicolumn{2}{c|}{$\exists^{10}\forall^{7}$}&\multicolumn{2}{c|}{$\exists^{10}\forall^{8}$}&\multicolumn{2}{c|}{$\exists^{10}\forall^{9}$}&\multicolumn{2}{c|}{$\exists^{10}\forall^{10}$}\\ 
& 3.986& 81& 4.553& 70& 5.777& 70& 4.791& 75& 8.284& 75& 1.534& 72& 4.338& 71& 4.18& 76& 5.512& 65& 4.529& 75\\ 
\hline 
	\end{tabular}
	}
\end{table}

%\vspace{-10ex}

\section{Conclusion}
\vspace{-1ex}
We have presented \tool, the first semi-decisionprocedure for checking the satisfiability of HyperLTL formulas beyond the decidable $\exists^*\forall^*$ fragment. An application of \tool is the analysis and development of hyperproperties and, especially, the generation of models that disprove equivalences or implications between different hyperproperties. In comparison to the existing decision procedure EAHyper, MGHyper not only handles the much larger class of hyperproperties, it also outperforms, as our experiments show, EAHyper within the $\exists^*\forall^*$ fragment.

\bibliographystyle{splncs03}
	
\bibliography{lib}

%\appendix
%\include{appendix}
	
\end{document}